\documentclass[10pt, english,prl,aps,twocolumn,floatfix,superscriptaddress]{revtex4-1}
\usepackage[T1]{fontenc}
\usepackage[latin9]{inputenc}
\usepackage{color}
\usepackage{babel}
\usepackage{verbatim}
\usepackage{refstyle}
\usepackage{amsthm}
\usepackage{amsmath}
\usepackage{graphicx}
\usepackage{color}
\usepackage[pdfstartview={FitH},bookmarks=false]
 {hyperref}

\theoremstyle{plain}
\newtheorem{thm}{\protect\theoremname}
  \theoremstyle{remark}

  \providecommand{\remarkname}{Remark}
\providecommand{\theoremname}{Theorem}

\newcommand{\Tes}{\Upsilon}

\makeatletter

\newcommand{\ket}[1]{\left| #1\right\rangle}        
\newcommand{\bra}[1]{\left\langle #1\right|}        
\begin{document}

\title{Novel Technique for Robust Optimal Algorithmic Cooling}

\author{Sadegh Raeisi}
\email[]{sadegh.raeisi@gmail.com}
\affiliation{Department of Physics, Sharif University of Technology, Tehran, Iran}
\author{M\'{a}ria Kieferov\'{a}}
\affiliation{Institute for Quantum Computing, University of Waterloo, Ontario N2L 3G1, Canada}
\affiliation{Department of Physics and Astronomy, University of Waterloo, Ontario N2L 3G1, Canada}
\affiliation{Department of Physics and Astronomy, Macquarie University, Sydney New South Wales 2109, Australia}
\author{Michele Mosca}
\affiliation{Institute for Quantum Computing, University of Waterloo, Ontario N2L 3G1, Canada}
\affiliation{Department of Combinatorics and Optimization, University of Waterloo, Ontario N2L 3G1, Canada}
\affiliation{Perimeter Institute for Theoretical Physics, Waterloo, Ontario N2L 2Y5, Canada}
\affiliation{Canadian Institute for Advanced Research, Toronto, Ontario M5G 1Z8, Canada}

\begin{abstract}
Heat-bath algorithmic cooling (HBAC)  
provides algorithmic ways to improve the purity of quantum states.
These techniques are complex iterative processes that 
change from each iteration to the next and this poses a 
significant challenge to implementing these algorithms. 
Here, we introduce a new technique that on a fundamental 
level, shows that it is possible to do algorithmic cooling and 
even reach the cooling limit without any knowledge of the state 
and using  only a single fixed operation,  
and on  a practical level, presents a more feasible and 
robust alternative for implementing HBAC. 
We also show that our new technique converges to the asymptotic state
of HBAC and that the cooling algorithm can be efficiently implemented; 
however, the saturation could require exponentially many 
iterations and remains impractical.
This brings HBAC to the realm of feasibility and 
makes it a viable option for realistic application 
in quantum technologies. 
\end{abstract}
\maketitle

Many quantum effects and quantum technologies rely on fragile quantum 
fluctuations that can easily be overwhelmed by thermal fluctuations. 
This is why often 
techniques for suppressing thermal fluctuations such as 
cooling in a cryostat or a dilution fridge are required. 
There are, however, dynamical cooling techniques 
that more surgically extract energy from subsystems 
of interest and can lower the temperature 
beyond what would be feasible with conventional 
cooling of the entire system.

Heat-bath algorithmic cooling (HBAC) are 
techniques that operate on an ensemble 
of qubits and effectively cool down and purify a target 
subset of  qubits. 
HBAC drives the system out of equilibrium by transferring 
the entropy from the target qubits to the rest of the ensemble. 
This is often referred to as ``compression'' 
since it uses information
theoretical techniques to 
compress the entropy to the nontarget elements of the ensemble 
and effectively cools down the target qubits. 
The target and the refrigerant qubits are often 
referred to as the ``computation''  
and the ``reset'' qubits respectively. 

HBAC can be seen as an extension of techniques like  
DNP or INEPT \cite{kuhn2013hyperpolarization} to situations 
where there is access to more than 
two spin species and thus could in principle
go beyond the purity of the reset qubit. While
applications of HBAC go beyond a specific implementation like NMR, 
it could be combined with techniques known in each implementation; e.g. 
DNP can be used to provide the source polarization of HBAC in NMR. 

Algorithmic cooling was first introduced
in \cite{schulman_scalable_1998} for a closed system, 
for which, the cooling 
is limited by the Shannon bound for information compression.
This process heats up the reset qubits beyond 
their initial temperature. 
It was later  proposed to use a heat-bath to recycle the reset 
qubits and enhance the cooling beyond the Shannon 
bound \cite{boykin_algorithmic_2002}. 
In this setting, 
the reset qubits, through the interaction with a 
heat-bath,  are  cooled down to the heat-bath temperature again. 
This is known as the ``reset step''.  

Similar settings has been investigated in the context of 
Quantum Thermodynamics(QT) and dynamic cooling 
\cite{janzing2000thermodynamic, scharlau2016quantum, boes2018neumann,
streltsov2018maximal, masanes2017general, allahverdyan2011thermodynamic}. 
The cooling limit, the corresponding resource 
theories and generalizations of the third law of thermodynamic
are among topics that have attracted a lot of attention
in QT 
\cite{ scharlau2016quantum, reeb2014improved,browne2014guaranteed,
allahverdyan2011thermodynamic, masanes2017general, streltsov2018maximal}.
It turns out that some of these results, like the existence of 
a cooling limit, apply to HBAC too. 
Interestingly, even with the help of a heat-bath, 
it is not possible to extract
all the entropy from the target qubits  
 \cite{schulman_physical_2005}. 
The optimal technique was introduced by Schulman \textit{et al.} 
in \cite{schulman_physical_2005} 
and is known as the partner pairing algorithm (PPA).
The existence of the limit was 
proved by Schulman \textit{et al.} 
in \cite{schulman_physical_2005}.
Later Raeisi and Mosca 
\cite{raeisi2015asymptotic} 
established the asymptotic limit of PPA,  
with the corresponding asymptotic state, 
and proved that the process asymptotically 
approaches the cooling limit.
We refer to the optimal asymptotic cooling state as OAS. 

One of the main challenges with HBAC techniques, 
especially the ones that converge to OAS, is that 
they are highly complex. 
The operations change from each iteration to the next 
and in many cases, there is no recipe for 
implementing the operations in each step. 

For instance, PPA sorts the diagonal elements of 
the density matrix in each iteration. But this means 
not only that one needs to know the state in each iteration, 
but also that the operation for implementing the sort would 
change in each iteration as the state changes. 

An obvious question is whether or not it would be possible 
to reach the OAS and the cooling limit with a fixed 
state-independent operation in each iteration. 
Note that the state is constantly changing through the
cooling process, and 
naturally, the compression should change too, 
as is the case with PPA. 
Reaching the OAS with a fixed operation seems even more non-trivial. 
In other words, for a HBAC technique with fixed operation, 
the compression should be tuned such that 
without knowing the state, 
not only can it extract entropy from the state, 
but  through the repetition of the process, 
it would push the state into the OAS. 
In the language of QT, this translates to finding a 
periodic evolution that would 
make an optimal cyclic cooling process. 
The typical complex non-periodic evolutions of the HBAC process
make it challenging to draw direct connections between QT and HBAC.
Existence of a cyclic HBAC 
technique with fixed iteration would go a long way in bridging 
this gap.

Here we answer this question and show that this is in fact possible. 
We introduce a compression operation that can push the state to 
the OAS and reach the cooling limit of HBAC and makes 
a state-independent cycle process. 
Further, we show that it can be implemented efficiently 
and give a recipe for building the quantum circuit. 

Besides the fundamental significance, this result  
could have a critical impact on the feasibility of HBAC 
techniques. First, in contrast to techniques such as 
PPA, our algorithm can be efficiently implemented. 
We, however, show that reaching the OAS would require 
exponentially many iterations. Second, 
the state-independence of operations 
makes our algorithm simple and more robust and turns HBAC 
to a viable option
for generating large scale supplies of high-purity
quantum states.

We start by introducing our algorithm and then present the 
complexity analysis. Next, we compare it against PPA. 
We then investigate the robustness of the two techniques. 

 
We assume an ensemble of $n+1$ qubits, with the first $n$
as the computation and the last as the reset qubits. 
We use the subscript $R$ and $C$ to refer to the 
reset and the computation qubits. 
We also assume that the Hilbert space is ordered as 
$\mathcal{H}_{C}\otimes\mathcal{H}_{R}$;  
the first part is the computation qubits and the last part 
are  the reset qubits.


In our technique, instead of sorting the diagonal elements, we apply the 
following unitary for compression in each iteration: 

\begin{equation}
U_{\text{TS}}=\left(\begin{array}{ccccc}
1\\
 & X\\
 &  & \ddots\\
 &  &  & X\\
 &  &  &  & 1
\end{array}\right),\label{eq:Bi-sort_U}
\end{equation}
where $X$ is the Pauli $X$ operator and the first and the last 
elements of the matrix are one. 
The matrix is $2^{n+1}\times2^{n+1}$
and acts on both the computation and the reset qubits. 
We refer to the unitary $U_{\text{TS}}$ as the two-sort operator
and to our technique as  two-sort algorithmic cooling (TSAC). 
The unitary $U_{\text{TS}}$ swaps every two neighboring 
elements on the diagonal of the density matrix, 
except for the first and the last elements. 
Intuitively, this is a partial sort that acts 
locally on the density matrix. 
This is the golden operation that makes it possible 
to reach the cooling limit without knowing the state. 

After compression,  the reset qubit is reset which is equivalent to 
 $\mathcal{R}\left[\rho\right]=\text{Tr}_{R}\left(\rho\right)\otimes\rho_{R}$,  where
$\text{Tr}_{R}$ is the partial trace over the 
reset qubit and $\rho_{R}$ is 
the ``reset state'' 
\begin{equation}
\rho_{R}=\frac{1}{z}\left(\begin{array}{cc}
e^{\epsilon} & 0\\
0 & e^{-\epsilon}
\end{array}\right),\label{eq:reset-state}
\end{equation}
with $z=\left(e^{\epsilon}+e^{-\epsilon}\right)$. 
The parameter $\epsilon$ is called the polarization. 
Our method does not make any non-trivial assumption 
about $\epsilon$ nor $n$.

Mathematically,  each iteration applies the following 
channel on the full density matrix
\begin{equation}
\label{eq:TSAC_Iteration}
\mathcal{C}_{\text{TS}}\left[\rho\right]=U_{\text{TS}}^{\dagger}\left[\text{Tr}_{R}\left(\rho\right)\otimes\rho_{R}\right] U_{\text{TS}}.
\end{equation}

This process is clearly independent of the iteration or the state
 and it can be described by a time-homogeneous Markov process. 
We find the transfer matrix  and  use its 
spectrum to prove that the process
converges to the OAS and to give an upper-bound for the required number of iterations.

The sequences of the elements on the diagonal of the 
density matrix form a Markov chain. 
We use  a vector with $2^{n+1}$ elements $\left\{ \lambda^t\right\}$ to represent  
the state after the $t$th iteration. 
We use a similar notation for the density matrix of 
the computation qubits (without the
reset qubit) and use $\left\{ p^t\right\} $ 
instead.

Figure \ref{fig:Pictorial_Iterations} gives a pictorial 
description of the process in each iteration. 
It starts with the sequence $\{ \lambda^{t} \}$,  
the diagonal elements of the density matrix of the $n$ 
computation and one reset qubit in the $t{\text{th}}$ iteration. 
First,  there is the reset step which 
takes the reset qubit to the 
state in Eq. (\ref{eq:reset-state}). 
This takes every two neighbouring elements $\lambda^{t}_{2k+1}$ 
and $\lambda^{t}_{2k+2}$ to $p^t_k= \lambda^{t}_{2k+1}+\lambda^{t}_{2k+2}$ 
and then splits them into $\zeta^{t}_{2k+1}=p^t_k e^{ \epsilon}/z$ 
and $\zeta^{t}_{2k+2}=p^t_k e^{- \epsilon}/z$. 
Now the two-sort unitary is applied and rearranges 
the array to $\{ \lambda^{t+1} \}$ such that  
$\lambda^{t+1}_{2k}=\zeta^{t}_{2k+1}$ and $\lambda^{t+1}_{2k+1}=\zeta^{t}_{2k}$.

\begin{figure}
\begin{centering}
\includegraphics[width=\columnwidth]{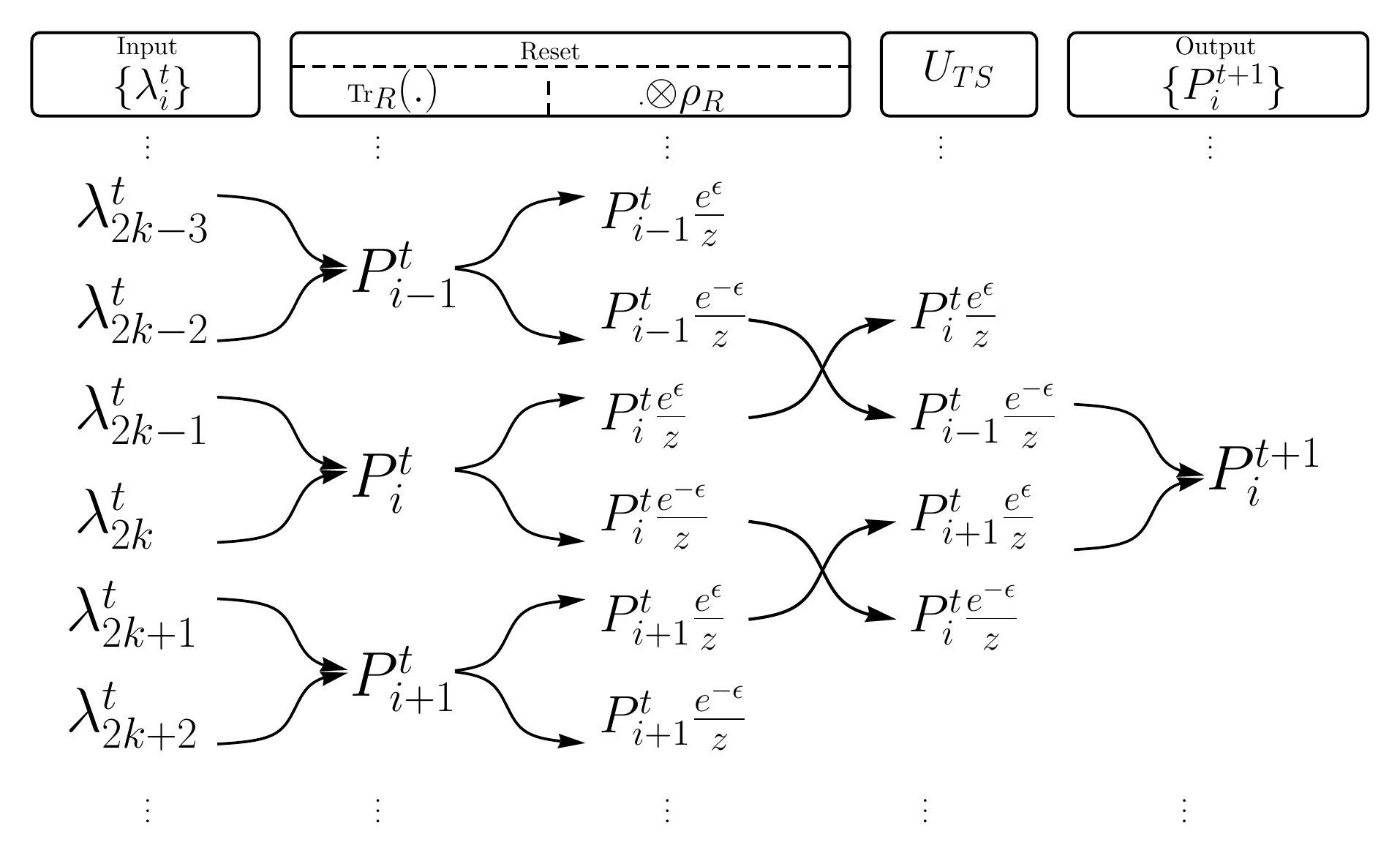}
\par\end{centering}

\caption{\label{fig:Pictorial_Iterations} The pictorial description of an iteration. 
The input is the list of the diagonal elements of the full density matrix, 
$\{ \lambda^{t} \}$. The reset step first merges every 
two neighbouring elements (partial trace) 
and then  replaces the reset qubit with $\rho_R$ which
splits each element into two elements again. 
Next, the $U_{\text{TS}}$ swaps all the neighbouring elements
except for the first and the last one.
}
\end{figure}

For simplicity, we focus on the computation qubits and trace out the reset qubit. This gives the following update rule for the diagonal elements of the computation qubits:
\begin{equation}
p_{i}^{t+1}=p_{i-1}^{t}\frac{e^{-\epsilon}}{z}+p_{i+1}^{t}\frac{e^{\epsilon}}{z}.\label{eq:Update_Rule}
\end{equation}
for $1<i<2^n$.

Similarly, for the first and the last elements, the update rules are 
$p_{1}^{t+1}=(p_{1}^{t}+p_{2}^{t})\frac{e^{\epsilon}}{z}$ and
$p_{2^{n}}^{t+1}=(p_{2^{n}-1}^{t}+p_{2^{n}}^{t})\frac{e^{-\epsilon}}{z}$.

These update rules give the following $2^n\times 2^n$ transition matrix for the Markov process:
\begin{equation}
T=\frac{1}{z}\left(\begin{array}{ccccc}
e^{\epsilon} & e^{\epsilon} & 0 & \cdots & 0\\
e^{-\epsilon} & 0 & e^{\epsilon} & \cdots & 0\\
0 & e^{-\epsilon} & 0 & \cdots & 0\\
0 & 0 & \cdots & \ddots & \vdots\\
0 & 0 & \cdots & e^{-\epsilon} & e^{-\epsilon}
\end{array}\right).\label{eq:MarkovMatrix}
\end{equation} 

It is easy to verify that $\vec{\{p^{t+1}\}} =T \vec{\{p^t \}}$ 
and gives the update rules above. 
The matrix $T$ has a unique eigenvalue $1$ and the remaining eigenvalues are  
 $\Tes_{k}=\frac{2\cos{\frac{k\pi}{2^n}}}{z}$ for $k=1,2, \cdots, 2^n-1$. 
 The eigenstate corresponding to  eigenvalue  one is
\begin{equation}
\rho =p_0 \lbrace1,e^{-2\epsilon}, e^{-4\epsilon}, \cdots\rbrace,
\label{OAS}
\end{equation}
which is the OAS and $p_0$ is the normalization 
factor \cite{raeisi2015asymptotic}.  
For the detailed calculation of the eigensystem, see the Supplemental Material(SM). 
Since all the other eigenvalues 
lie in the interval $(1, -1)$, the Markov chain asymptotically converges to
$\rho$. This proves that our technique asymptotically achieves the cooling 
limit of HBAC.

We give a $O\left(n^{2}\right)$  
circuit for the implementation of the two-sort unitary in the SM. 
We first shift the basis 
by one, which transforms it to $\text{Toff}(n+1) \sigma_x^{(n+1)}$. 
Then a Pauli $\sigma_x$ on the reset qubit 
turns the matrix to a multiple-control Toffoli gate (see SM for details). 
This shows that our technique can be efficiently implemented. 
However, to reach the OAS, we need to investigate how many 
iterations would be required. 
The mixing time of a Markov chain is the number of iterations required to get within
distance $\xi$ of the asymptotic state (i.e. to achieve $| \rho^{t} - \rho_{OAS} | \leq \xi$).
We can upper bound this number of iterations as a function of the spectral gap $\Delta$, i.e. 
the difference between $1$ and the second largest eigenvalue,
\begin{equation}
\label{eq:mixingtime_upperbound}
t_{\text{mix}}\left(\xi\right)\leq\log\left(\frac{1}{\xi\, l}\right)\frac{1}{\Delta},
\end{equation}
where 
$l=p_0 e^{-(2^n-1)\epsilon}$ 
 is the smallest element of the array in Eq. (\ref{OAS})  \cite{levin2009markov}.

The spectral gap is $\Delta = 1-\left(2\cos{\frac{\pi}{2^n}}\right)/\left(e^{\epsilon}+e^{-\epsilon}\right)\geq \frac{z-2}{z}$.
This gives 
\begin{equation}
t_{\text{mix}} (\xi) \leq \left( \log\left(\frac{1}{\xi \, p_0 e^{-(2^n-1)\epsilon}}\right) \left(\frac{z}{z-2}\right) \right).
\end{equation}
It is easy to check that  $t_{\text{mix}} \in O(2^{n})$ 
which yields 
 $O(n^2 2^n)$ for the overall complexity  of TSAC. 

Now, we compare our technique with PPA. 
The key ingredient of PPA is 
  sorting the diagonal elements of 
the density matrix in each iteration. 
This transfers as
much  entropy as possible from the computation elements 
to the reset qubit \cite{schulman_physical_2005}.  
Then the reset qubit is reset back to its equilibrium state.

Even assuming that finding the operation for 
sorting a $2^{n+1}$ array is easy,
we need to find 
quantum circuits to implement them which has at least
$O(e^n)$ classical complexity. 
Note that this is only the classical cost of the 
algorithm. Without this, it would not be even 
possible to start implementing PPA. 
For systems as large as 20-40 qubits, 
e.g. the experiment in \cite{pande2017strong},   
not only it is challenging to implement PPA,  
but it also seems difficult to find the required permutations.
This is in contrast to our technique, where each 
iteration is already known and there is a specific 
circuit for implementing it.


Next is the gate complexity of PPA. 
Typically only the number of iterations is counted, 
ignoring the complexity of the sort operations. 
In fact, due to the complexity of
the sort operations, it is difficult to  bound the 
number of gates required for PPA. 
Naively, there are $(2^n)!$ permutation matrices of size $2^n$. 
Assuming a finite number of one- and two-qubit gates, 
there are only $\left[\text{poly}(n)\right]^d$ circuits with at 
most $d$ gates on $n$ qubits. Taking $d$ to be $\text{poly}(n)$, 
one can see that only a small fraction of permutations 
can be implemented efficiently.
Here, we provide a more rigorous bound on 
the gate complexity of PPA.

The permutation operations can be decomposed into separate 
cycles that form disjoint blocks in the permutation matrix. 
These cycles could have different sizes and the size of each cycle 
determines the number of states it permutes cyclically. 
These are known as $k$-cycles, 
where $k$ is the size of the block.

Assume that for all the $k$-cycles in the permutation, 
we can find an efficient circuit. 
Also assume that, given a certain state $\ket{j}$, it is possible to 
efficiently determine which block the state belongs to. 
Furthermore, we assume that cycles of equal size can be 
implemented in parallel efficiently. 
These assumptions may not be true, 
but any lower bound established with these assumptions 
still holds when any of these assumptions are weakened or dropped.
Under these assumptions, the cost would depend on the 
number of $k$-cycles with distinct $k$ values. 
This is the number of blocks in the permutation matrix 
that have different size. 
We refer to this quantity as NBDS. 

Implementation of each sort operation requires 
the implementation of all the blocks. Blocks of 
different size cannot be fully parallelized and 
for switching between each of two blocks of unequal size, 
some quantum operation would be required. 
This sets the number of blocks of 
different size NBDS as a lower bound for the complexity of 
any sort operation.

Figure (\ref{fig:NBDS}) shows the simulation 
results of PPA for a different number 
of computation qubits, $n$  and indicates that 
NBDS grows exponentially with $n$. This implies that our lower-bound for 
the gate complexity of PPA scales exponentially with $n$.  
Here, for any value of $n$, we get a sequence of permutation matrices and 
  pick the permutation that has the largest NBDS. 
\begin{figure}
\begin{centering}
\includegraphics[trim={0cm 0cm 0cm 0cm},clip ,width=\columnwidth]{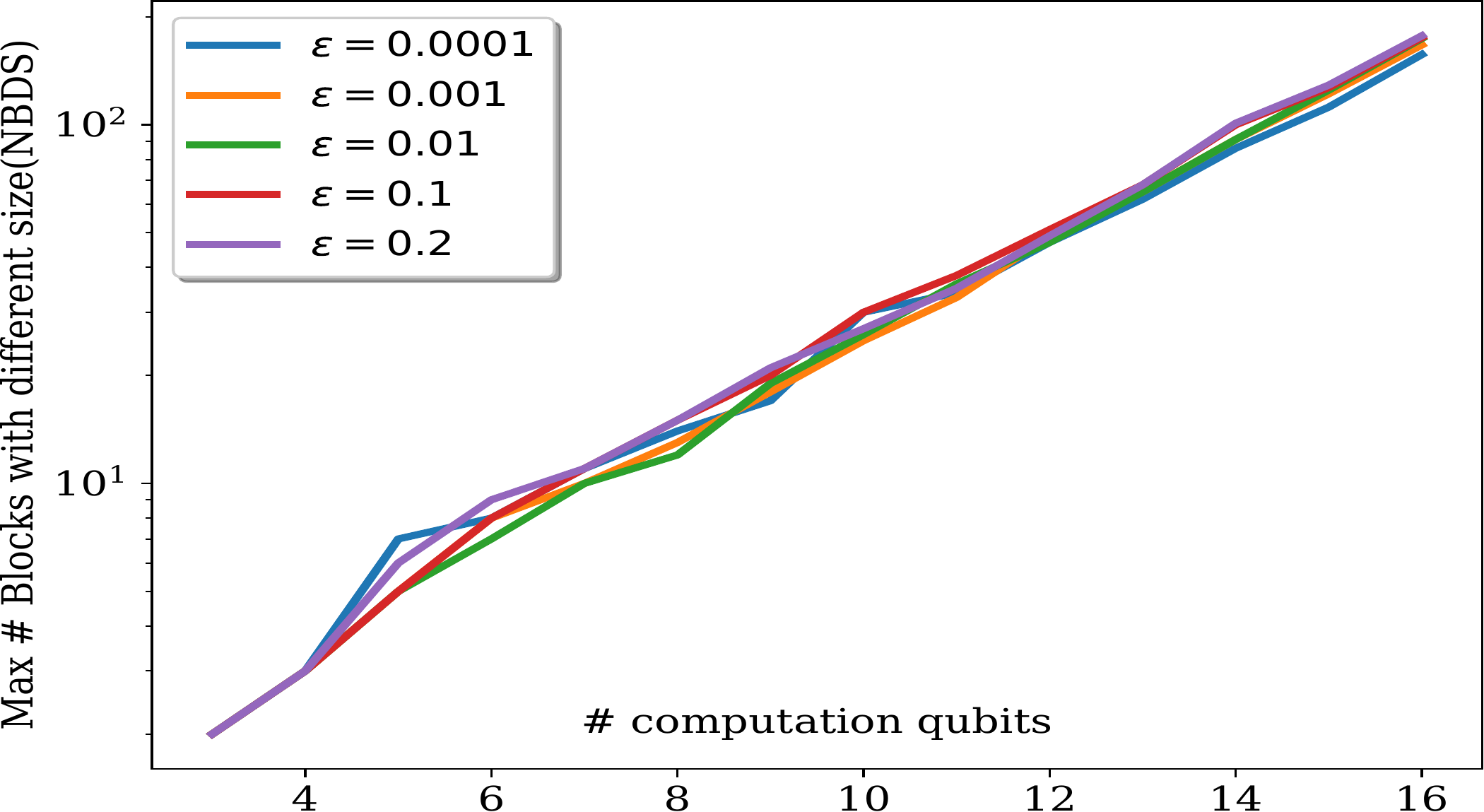}
\par\end{centering}

\caption{\label{fig:NBDS} This plot shows that the maximum of 
the NBDS for implementations of PPA grows exponentially with 
the number of computation qubits $n$.  The $y$ axis a 
is logarithmic scale.  Different plots correspond to different
reset polarizations $\epsilon$. 
 }
\end{figure}

Last, there is the fragility  to practical  
imperfections. 
The sort operation of PPA requires the ordering of 
the diagonal elements of the state. This means that 
techniques like quantum state tomography are needed 
to monitor the state. This process however, cannot 
be perfect and usually there are estimation errors. 
Figure (\ref{fig:Noisy_Tomography})  shows 
how sensitive the process is to these imperfections. 
These simulations are for HBAC with $n=2$ and one reset 
qubit and the reset polarization of $\epsilon = 0.02$ with 
a zero-mean Gaussian noise with variance $\sigma$. 
As $\sigma$ increases, the process becomes random and 
would not approach the cooling limit any more.  
Figure (\ref{fig:Noisy_Tomography}) also shows the result for TSAC 
which regardless of 
the noise, would always converge to the OAS. 
This is not a generic noise model, but is  
relevant for techniques like PPA and shows that with noise,
PPA can heat the state.
Next, we show that TSAC  monotonically pushes the state towards the OAS. 

\begin{thm}\label{thm:Robustness}
Given some state $\rho$ and a reset state $\rho_R$ with 
polarization $\epsilon$, 
if the polarization of the first 
computation qubit is less than the HBAC limit, each iteration 
of TSAC, as in equation (\ref{eq:TSAC_Iteration}) would increase
the polarization of the first qubit. 
\end{thm}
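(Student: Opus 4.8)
The plan is to reduce the statement to the evolution of a single scalar and read its monotonicity directly off the Markov dynamics. The polarization of the first computation qubit is a strictly increasing function of $P^{t}:=\sum_{i=1}^{2^{n-1}}p_{i}^{t}$, the population of its lower-energy level after the remaining $n-1$ computation qubits are traced out, since $P^{t}$ and $1-P^{t}$ are its two marginal populations. Hence it suffices to show that $P^{t}$ strictly increases whenever it lies below its value in the OAS (\ref{OAS}), and for this I would work entirely with the update rule (\ref{eq:Update_Rule}) and its boundary cases.

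First I would sum (\ref{eq:Update_Rule}) over $i=1,\dots,2^{n-1}$ and telescope. Using $(e^{\epsilon}+e^{-\epsilon})/z=1$, every interior gain and loss cancels and only the net flux across the cut separating the qubit's two blocks survives:
\begin{equation}
P^{t+1}-P^{t}=\frac{e^{\epsilon}}{z}\,p^{t}_{2^{n-1}+1}-\frac{e^{-\epsilon}}{z}\,p^{t}_{2^{n-1}}.
\end{equation}
Thus the sign of the increment is controlled purely by the two central diagonal elements: $P^{t}$ grows iff $p^{t}_{2^{n-1}+1}/p^{t}_{2^{n-1}}>e^{-2\epsilon}$. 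At the OAS consecutive elements have exactly this ratio, so the increment vanishes, which reconfirms that the HBAC limit is the fixed point; and for a single computation qubit the increment collapses to $P_{\mathrm{OAS}}-P^{t}$, giving the statement at once.

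The crux, and where I expect the real work to be, is the implication ``$P^{t}$ below the limit $\Rightarrow$ the central ratio exceeds $e^{-2\epsilon}$'': the increment sees only the two central populations, whereas the hypothesis constrains their block \emph{sum}, so a priori the two are decoupled. I would close this gap by controlling the shape of the diagonal, not merely its block mass. The natural tool is the stochastic monotonicity of the tridiagonal (birth--death) kernel $T$ in (\ref{eq:MarkovMatrix}): such kernels preserve stochastic order, so for a diagonal suitably comparable to the OAS the entire cumulative distribution---and in particular $P^{t}$---relaxes monotonically toward the limit, dovetailing with the mixing-time machinery already invoked \cite{levin2009markov}. The delicate point is that the hypothesis as stated pins down only one point of the cumulative distribution; I anticipate needing to show either that the channel $\mathcal{C}_{\text{TS}}$ itself enforces the required comparability---an \emph{invariant}, such as a monotone-likelihood/log-ratio ordering matching the geometric OAS, propagated through each iteration---or that the clean single-qubit reduction above can be bootstrapped by induction on $n$ into the full chain. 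Establishing that such an invariant is preserved is the main obstacle; the telescoping identity itself is routine.
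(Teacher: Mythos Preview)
Your telescoping identity is correct and is exactly the paper's argument recast in Markov-chain language: the paper observes that after reset the boundary pair becomes $(p_{2^{n-1}}e^{-\epsilon}/z,\;p_{2^{n-1}+1}e^{\epsilon}/z)$ and that $U_{\text{TS}}$ swaps them across the cut, so the first-half mass changes by precisely your increment $\tfrac{e^{\epsilon}}{z}p_{2^{n-1}+1}-\tfrac{e^{-\epsilon}}{z}p_{2^{n-1}}$.

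Where you diverge is the step you call ``the crux.'' The paper does not derive the central inequality $p_{2^{n-1}}e^{-\epsilon}<p_{2^{n-1}+1}e^{\epsilon}$ from the polarization hypothesis; it simply \emph{asserts} that ``the state is hotter than the HBAC limit'' means this inequality, and the whole proof is then two sentences. So the stochastic-monotonicity/invariant programme you outline is not present in the paper and goes well beyond it. Your suspicion that the literal hypothesis (first-qubit polarization below the limit) does not by itself force the central inequality is justified---for an arbitrary diagonal one can make $p_1+\cdots+p_{2^{n-1}}$ small while taking $p_{2^{n-1}}\gg p_{2^{n-1}+1}$ (e.g.\ $n=2$, $(p_1,p_2,p_3,p_4)=(0,0.4,0.1,0.5)$ at small $\epsilon$), and then one TSAC step \emph{decreases} $P^t$. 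Hence if you insist on the theorem in that strong form, the extra structure you propose (an order invariant propagated by $\mathcal{C}_{\text{TS}}$) would indeed be required; the paper instead effectively reads the hypothesis as the central inequality itself.
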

\begin{proof}
The polarization of the first qubit is determined by the first half of 
the diagonal elements of $\rho$ and we need to show that TSAC would 
increase it. 
The assumption that the state is hotter than the HBAC limit means
\begin{equation}
p_{2^{n-1}}e^{-\epsilon} < p_{2^{n-1}+1}e^{\epsilon}.
\end{equation}
After the iteration, these two are swapped by the $U_{\text{TS}}$. This
increases the sum of the first half of the diagonal elements and 
as a result, the polarization of the first qubit. 
\end{proof}
Note that this can be extended to other computation qubits. 
It is just easier to show for the first qubit. 

\begin{figure}
\begin{centering}
\includegraphics[trim={0cm 0cm 0cm 0cm},width=\columnwidth]{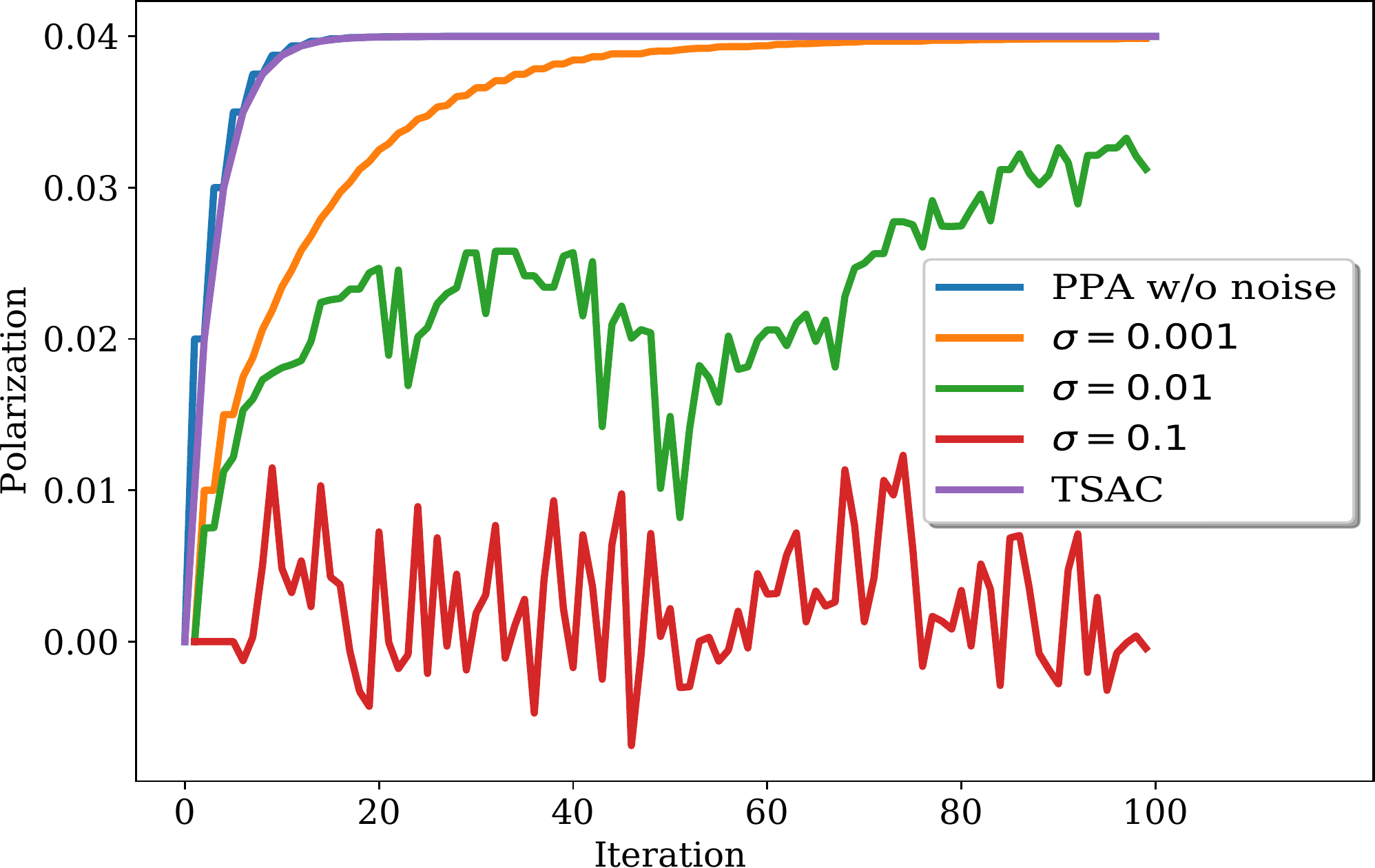}
\par\end{centering}

\caption{\label{fig:Noisy_Tomography} 
This plot shows the polarization of the first qubit vs iterations. 
This is for the simulation of PPA for two computation qubits and one
reset qubit, with different amounts of the state estimation errors $\sigma$.  }
\end{figure}

Since the process does not depend on the state, 
even when the state is perturbed from the ideal one, 
the process continues to cool it down. 
Note that this does not imply that our algorithm 
is robust to all imperfections. Specifically, 
with faulty operations, no algorithm can 
guarantee the convergence to OAS.

Table (\ref{tab:Comparison}) gives a comparison between TSAC and PPA. 
Here, the ``noise sensitivity'' is the sensitivity to deviations
from the expected state in the process. 
The table  demonstrates that
TSAC outperforms PPA in almost every aspect 
and presents a more realistic option in practice.

\begin{table}
\tiny
\begin{centering}
\resizebox{.9\columnwidth}{!}{%
\begin{tabular}{|c|c|c|}
\hline 
 & PPA & TSAC\tabularnewline
\hline 
\hline 
\shortstack{Classical  cost \\  of circuit synthesis
}  & $\Omega(2^n)$ & $\text{poly}(n) $\tabularnewline
\hline 
\shortstack{Total number \\ of gates} & \shortstack{(Conjecture)\\ $\Omega(2^{n})$} & $O\left(n^2 2^{n}\right)$\tabularnewline
\hline 
Noise sensitivity & Sensitive& Robust \tabularnewline
\hline 
\end{tabular}%
}
\par\end{centering}
\caption{\label{tab:Comparison}Comparison between PPA and our technique. 
Here, the ``noise sensitivity" is the sensitivity to deviations
from the expected state in the process. }
\end{table}

In conclusion, our work presents a novel viable technique for optimal 
HBAC which shows that optimal HBAC is possible without any 
knowledge of the state and without changing the operation 
through the process. 
From a Quantum Thermodynamics(QT) viewpoint, 
it means that the optimal cooling is possible to do HBAC 
 with a cyclic process. 
This opens new avenues for examining HBAC in terms of QT.
For instance, resources 
required for reaching the cooling limit have been 
extensively investigated in QT 
\cite{ scharlau2016quantum, 
allahverdyan2011thermodynamic, masanes2017general, streltsov2018maximal,
 reeb2014improved, browne2014guaranteed,masanes2017general}. 
It is interesting to map the required resources 
and their scaling to HBAC.

Our work also brings 
realistic applications of these techniques
to the realm of possibility. 
The new technique, in contrast to PPA,  uses a fixed operation in every iteration, 
which addresses the fragility issues in previous works. 
More precisely, the new technique is robust 
against imperfections and noise in the state.

It is also possible to combine our work with other dynamics cooling 
techniques 
to further reduce the costs. 
However, it remains open to see
how far the complexity may be reduced.
Results from QT on the analysis of the 
resources for cooling and the extensions of the third 
law of thermodynamics
\cite{ masanes2017general, scharlau2016quantum, 
allahverdyan2011thermodynamic,streltsov2018maximal}
 could prove helpful for reducing the complexity of cost.

\begin{acknowledgments}
We thank Alex Parent for helpful discussions. 
This work was supported by the research 
grant system of Sharif University of Technology (G960219),
ERC Starting Grant OPTOMECH, 
Canada's NSERC, CIFAR and CFI.
IQC and Perimeter Institute are supported in part by the Government
of Canada and the Province of Ontario. 
\end{acknowledgments}

\bibliographystyle{aipnum4-1}
\bibliography{Cooling_Bound_AC}

\begin{thebibliography}{17}%
\makeatletter
\providecommand \@ifxundefined [1]{%
 \@ifx{#1\undefined}
}%
\providecommand \@ifnum [1]{%
 \ifnum #1\expandafter \@firstoftwo
 \else \expandafter \@secondoftwo
 \fi
}%
\providecommand \@ifx [1]{%
 \ifx #1\expandafter \@firstoftwo
 \else \expandafter \@secondoftwo
 \fi
}%
\providecommand \natexlab [1]{#1}%
\providecommand \enquote  [1]{``#1''}%
\providecommand \bibnamefont  [1]{#1}%
\providecommand \bibfnamefont [1]{#1}%
\providecommand \citenamefont [1]{#1}%
\providecommand \href@noop [0]{\@secondoftwo}%
\providecommand \href [0]{\begingroup \@sanitize@url \@href}%
\providecommand \@href[1]{\@@startlink{#1}\@@href}%
\providecommand \@@href[1]{\endgroup#1\@@endlink}%
\providecommand \@sanitize@url [0]{\catcode `\\12\catcode `\$12\catcode
  `\&12\catcode `\#12\catcode `\^12\catcode `\_12\catcode `\%12\relax}%
\providecommand \@@startlink[1]{}%
\providecommand \@@endlink[0]{}%
\providecommand \url  [0]{\begingroup\@sanitize@url \@url }%
\providecommand \@url [1]{\endgroup\@href {#1}{\urlprefix }}%
\providecommand \urlprefix  [0]{URL }%
\providecommand \Eprint [0]{\href }%
\providecommand \doibase [0]{http://dx.doi.org/}%
\providecommand \selectlanguage [0]{\@gobble}%
\providecommand \bibinfo  [0]{\@secondoftwo}%
\providecommand \bibfield  [0]{\@secondoftwo}%
\providecommand \translation [1]{[#1]}%
\providecommand \BibitemOpen [0]{}%
\providecommand \bibitemStop [0]{}%
\providecommand \bibitemNoStop [0]{.\EOS\space}%
\providecommand \EOS [0]{\spacefactor3000\relax}%
\providecommand \BibitemShut  [1]{\csname bibitem#1\endcsname}%
\let\auto@bib@innerbib\@empty
\bibitem [{\citenamefont {Kuhn}\ and\ \citenamefont
  {Akbey}(2013)}]{kuhn2013hyperpolarization}%
  \BibitemOpen
  \bibfield  {author} {\bibinfo {author} {\bibfnamefont {L.~T.}\ \bibnamefont
  {Kuhn}}\ and\ \bibinfo {author} {\bibfnamefont {{\"U}.}~\bibnamefont
  {Akbey}},\ }\href@noop {} {\emph {\bibinfo {title} {Hyperpolarization methods
  in NMR spectroscopy}}},\ Vol.\ \bibinfo {volume} {338}\ (\bibinfo
  {publisher} {Springer},\ \bibinfo {year} {2013})\BibitemShut {NoStop}%
\bibitem [{sch()}]{schulman_scalable_1998}%
  \BibitemOpen
  \BibitemShut {NoStop}%
\bibitem [{\citenamefont {Boykin}\ \emph {et~al.}(2002)\citenamefont {Boykin},
  \citenamefont {Mor}, \citenamefont {Roychowdhury}, \citenamefont {Vatan},\
  and\ \citenamefont {Vrijen}}]{boykin_algorithmic_2002}%
  \BibitemOpen
  \bibfield  {author} {\bibinfo {author} {\bibfnamefont {P.~O.}\ \bibnamefont
  {Boykin}}, \bibinfo {author} {\bibfnamefont {T.}~\bibnamefont {Mor}},
  \bibinfo {author} {\bibfnamefont {V.}~\bibnamefont {Roychowdhury}}, \bibinfo
  {author} {\bibfnamefont {F.}~\bibnamefont {Vatan}}, \ and\ \bibinfo {author}
  {\bibfnamefont {R.}~\bibnamefont {Vrijen}},\ }\href {\doibase
  10.1073/pnas.241641898} {\bibfield  {journal} {\bibinfo  {journal}
  {Proceedings of the National Academy of Sciences}\ }\textbf {\bibinfo
  {volume} {99}},\ \bibinfo {pages} {3388} (\bibinfo {year} {2002})},\ \bibinfo
  {note} {{PMID:} 11904402}\BibitemShut {NoStop}%
\bibitem [{\citenamefont {Janzing}\ \emph {et~al.}(2000)\citenamefont
  {Janzing}, \citenamefont {Wocjan}, \citenamefont {Zeier}, \citenamefont
  {Geiss},\ and\ \citenamefont {Beth}}]{janzing2000thermodynamic}%
  \BibitemOpen
  \bibfield  {author} {\bibinfo {author} {\bibfnamefont {D.}~\bibnamefont
  {Janzing}}, \bibinfo {author} {\bibfnamefont {P.}~\bibnamefont {Wocjan}},
  \bibinfo {author} {\bibfnamefont {R.}~\bibnamefont {Zeier}}, \bibinfo
  {author} {\bibfnamefont {R.}~\bibnamefont {Geiss}}, \ and\ \bibinfo {author}
  {\bibfnamefont {T.}~\bibnamefont {Beth}},\ }\href@noop {} {\bibfield
  {journal} {\bibinfo  {journal} {International Journal of Theoretical
  Physics}\ }\textbf {\bibinfo {volume} {39}},\ \bibinfo {pages} {2717}
  (\bibinfo {year} {2000})}\BibitemShut {NoStop}%
\bibitem [{\citenamefont {Scharlau}\ and\ \citenamefont
  {Mueller}(2016)}]{scharlau2016quantum}%
  \BibitemOpen
  \bibfield  {author} {\bibinfo {author} {\bibfnamefont {J.}~\bibnamefont
  {Scharlau}}\ and\ \bibinfo {author} {\bibfnamefont {M.~P.}\ \bibnamefont
  {Mueller}},\ }\href@noop {} {\bibfield  {journal} {\bibinfo  {journal} {arXiv
  preprint arXiv:1605.06092}\ } (\bibinfo {year} {2016})}\BibitemShut {NoStop}%
\bibitem [{\citenamefont {Boes}\ \emph {et~al.}(2018)\citenamefont {Boes},
  \citenamefont {Eisert}, \citenamefont {Gallego}, \citenamefont {M{\"u}ller},\
  and\ \citenamefont {Wilming}}]{boes2018neumann}%
  \BibitemOpen
  \bibfield  {author} {\bibinfo {author} {\bibfnamefont {P.}~\bibnamefont
  {Boes}}, \bibinfo {author} {\bibfnamefont {J.}~\bibnamefont {Eisert}},
  \bibinfo {author} {\bibfnamefont {R.}~\bibnamefont {Gallego}}, \bibinfo
  {author} {\bibfnamefont {M.~P.}\ \bibnamefont {M{\"u}ller}}, \ and\ \bibinfo
  {author} {\bibfnamefont {H.}~\bibnamefont {Wilming}},\ }\href@noop {}
  {\bibfield  {journal} {\bibinfo  {journal} {arXiv preprint arXiv:1807.08773}\
  } (\bibinfo {year} {2018})}\BibitemShut {NoStop}%
\bibitem [{\citenamefont {Streltsov}\ \emph {et~al.}(2018)\citenamefont
  {Streltsov}, \citenamefont {Kampermann}, \citenamefont {W{\"o}lk},
  \citenamefont {Gessner},\ and\ \citenamefont
  {Bru{\ss}}}]{streltsov2018maximal}%
  \BibitemOpen
  \bibfield  {author} {\bibinfo {author} {\bibfnamefont {A.}~\bibnamefont
  {Streltsov}}, \bibinfo {author} {\bibfnamefont {H.}~\bibnamefont
  {Kampermann}}, \bibinfo {author} {\bibfnamefont {S.}~\bibnamefont
  {W{\"o}lk}}, \bibinfo {author} {\bibfnamefont {M.}~\bibnamefont {Gessner}}, \
  and\ \bibinfo {author} {\bibfnamefont {D.}~\bibnamefont {Bru{\ss}}},\
  }\href@noop {} {\bibfield  {journal} {\bibinfo  {journal} {New Journal of
  Physics}\ }\textbf {\bibinfo {volume} {20}},\ \bibinfo {pages} {053058}
  (\bibinfo {year} {2018})}\BibitemShut {NoStop}%
\bibitem [{\citenamefont {Masanes}\ and\ \citenamefont
  {Oppenheim}(2017)}]{masanes2017general}%
  \BibitemOpen
  \bibfield  {author} {\bibinfo {author} {\bibfnamefont {L.}~\bibnamefont
  {Masanes}}\ and\ \bibinfo {author} {\bibfnamefont {J.}~\bibnamefont
  {Oppenheim}},\ }\href@noop {} {\bibfield  {journal} {\bibinfo  {journal}
  {Nature communications}\ }\textbf {\bibinfo {volume} {8}},\ \bibinfo {pages}
  {14538} (\bibinfo {year} {2017})}\BibitemShut {NoStop}%
\bibitem [{\citenamefont {Allahverdyan}\ \emph {et~al.}(2011)\citenamefont
  {Allahverdyan}, \citenamefont {Hovhannisyan}, \citenamefont {Janzing},\ and\
  \citenamefont {Mahler}}]{allahverdyan2011thermodynamic}%
  \BibitemOpen
  \bibfield  {author} {\bibinfo {author} {\bibfnamefont {A.~E.}\ \bibnamefont
  {Allahverdyan}}, \bibinfo {author} {\bibfnamefont {K.~V.}\ \bibnamefont
  {Hovhannisyan}}, \bibinfo {author} {\bibfnamefont {D.}~\bibnamefont
  {Janzing}}, \ and\ \bibinfo {author} {\bibfnamefont {G.}~\bibnamefont
  {Mahler}},\ }\href@noop {} {\bibfield  {journal} {\bibinfo  {journal}
  {Physical Review E}\ }\textbf {\bibinfo {volume} {84}},\ \bibinfo {pages}
  {041109} (\bibinfo {year} {2011})}\BibitemShut {NoStop}%
\bibitem [{\citenamefont {Reeb}\ and\ \citenamefont
  {Wolf}(2014)}]{reeb2014improved}%
  \BibitemOpen
  \bibfield  {author} {\bibinfo {author} {\bibfnamefont {D.}~\bibnamefont
  {Reeb}}\ and\ \bibinfo {author} {\bibfnamefont {M.~M.}\ \bibnamefont
  {Wolf}},\ }\href@noop {} {\bibfield  {journal} {\bibinfo  {journal} {New
  Journal of Physics}\ }\textbf {\bibinfo {volume} {16}},\ \bibinfo {pages}
  {103011} (\bibinfo {year} {2014})}\BibitemShut {NoStop}%
\bibitem [{\citenamefont {Browne}\ \emph {et~al.}(2014)\citenamefont {Browne},
  \citenamefont {Garner}, \citenamefont {Dahlsten},\ and\ \citenamefont
  {Vedral}}]{browne2014guaranteed}%
  \BibitemOpen
  \bibfield  {author} {\bibinfo {author} {\bibfnamefont {C.}~\bibnamefont
  {Browne}}, \bibinfo {author} {\bibfnamefont {A.~J.}\ \bibnamefont {Garner}},
  \bibinfo {author} {\bibfnamefont {O.~C.}\ \bibnamefont {Dahlsten}}, \ and\
  \bibinfo {author} {\bibfnamefont {V.}~\bibnamefont {Vedral}},\ }\href@noop {}
  {\bibfield  {journal} {\bibinfo  {journal} {Physical review letters}\
  }\textbf {\bibinfo {volume} {113}},\ \bibinfo {pages} {100603} (\bibinfo
  {year} {2014})}\BibitemShut {NoStop}%
\bibitem [{\citenamefont {Schulman}, \citenamefont {Mor},\ and\ \citenamefont
  {Weinstein}(2005)}]{schulman_physical_2005}%
  \BibitemOpen
  \bibfield  {author} {\bibinfo {author} {\bibfnamefont {L.~J.}\ \bibnamefont
  {Schulman}}, \bibinfo {author} {\bibfnamefont {T.}~\bibnamefont {Mor}}, \
  and\ \bibinfo {author} {\bibfnamefont {Y.}~\bibnamefont {Weinstein}},\ }\href
  {\doibase 10.1103/PhysRevLett.94.120501} {\bibfield  {journal} {\bibinfo
  {journal} {Phys. Rev. Lett.}\ }\textbf {\bibinfo {volume} {94}},\ \bibinfo
  {pages} {120501} (\bibinfo {year} {2005})}\BibitemShut {NoStop}%
\bibitem [{\citenamefont {Raeisi}\ and\ \citenamefont
  {Mosca}(2015)}]{raeisi2015asymptotic}%
  \BibitemOpen
  \bibfield  {author} {\bibinfo {author} {\bibfnamefont {S.}~\bibnamefont
  {Raeisi}}\ and\ \bibinfo {author} {\bibfnamefont {M.}~\bibnamefont {Mosca}},\
  }\href@noop {} {\bibfield  {journal} {\bibinfo  {journal} {Physical review
  letters}\ }\textbf {\bibinfo {volume} {114}},\ \bibinfo {pages} {100404}
  (\bibinfo {year} {2015})}\BibitemShut {NoStop}%
\bibitem [{\citenamefont {Levin}, \citenamefont {Peres},\ and\ \citenamefont
  {Wilmer}(2009)}]{levin2009markov}%
  \BibitemOpen
  \bibfield  {author} {\bibinfo {author} {\bibfnamefont {D.~A.}\ \bibnamefont
  {Levin}}, \bibinfo {author} {\bibfnamefont {Y.}~\bibnamefont {Peres}}, \ and\
  \bibinfo {author} {\bibfnamefont {E.~L.}\ \bibnamefont {Wilmer}},\
  }\href@noop {} {\emph {\bibinfo {title} {Markov chains and mixing times}}}\
  (\bibinfo  {publisher} {American Mathematical Soc.},\ \bibinfo {year}
  {2009})\BibitemShut {NoStop}%
\bibitem [{\citenamefont {Pande}\ \emph {et~al.}(2017)\citenamefont {Pande},
  \citenamefont {Bhole}, \citenamefont {Khurana},\ and\ \citenamefont
  {Mahesh}}]{pande2017strong}%
  \BibitemOpen
  \bibfield  {author} {\bibinfo {author} {\bibfnamefont {V.~R.}\ \bibnamefont
  {Pande}}, \bibinfo {author} {\bibfnamefont {G.}~\bibnamefont {Bhole}},
  \bibinfo {author} {\bibfnamefont {D.}~\bibnamefont {Khurana}}, \ and\
  \bibinfo {author} {\bibfnamefont {T.}~\bibnamefont {Mahesh}},\ }\href@noop {}
  {\bibfield  {journal} {\bibinfo  {journal} {Physical Review A}\ }\textbf
  {\bibinfo {volume} {96}},\ \bibinfo {pages} {012330} (\bibinfo {year}
  {2017})}\BibitemShut {NoStop}%
\bibitem [{\citenamefont {Saeedi}\ and\ \citenamefont
  {Pedram}(2013)}]{saeedi2013linear}%
  \BibitemOpen
  \bibfield  {author} {\bibinfo {author} {\bibfnamefont {M.}~\bibnamefont
  {Saeedi}}\ and\ \bibinfo {author} {\bibfnamefont {M.}~\bibnamefont
  {Pedram}},\ }\href@noop {} {\bibfield  {journal} {\bibinfo  {journal}
  {Physical Review A}\ }\textbf {\bibinfo {volume} {87}},\ \bibinfo {pages}
  {062318} (\bibinfo {year} {2013})}\BibitemShut {NoStop}%
\bibitem [{\citenamefont {Beth}\ and\ \citenamefont
  {R{\"o}tteler}(2001)}]{beth2001quantum}%
  \BibitemOpen
  \bibfield  {author} {\bibinfo {author} {\bibfnamefont {T.}~\bibnamefont
  {Beth}}\ and\ \bibinfo {author} {\bibfnamefont {M.}~\bibnamefont
  {R{\"o}tteler}},\ }in\ \href@noop {} {\emph {\bibinfo {booktitle} {Quantum
  information}}}\ (\bibinfo  {publisher} {Springer},\ \bibinfo {year} {2001})\
  pp.\ \bibinfo {pages} {96--150}\BibitemShut {NoStop}%
\end{thebibliography}%

\onecolumngrid
\vspace{20cm}

\setcounter{equation}{0}
\setcounter{figure}{0}
\setcounter{table}{0}
\makeatletter
\renewcommand{\theequation}{S\arabic{equation}}
\renewcommand{\thefigure}{S\arabic{figure}}
\renewcommand{\bibnumfmt}[1]{[S#1]}
\renewcommand{\citenumfont}[1]{S#1}

\section{Supplementary Material}

\subsection*{Spectrum of the transfer matrix}

We solve the eigenvalue equation, $T \Phi^{(p)}=\Tes \Phi^{(p)}$, 
indexing the eigenvectors by $p$. 
Using the sparsity and the structure of $T$, we can rewrite 
the eigenvalue equations as

\begin{align}
\Tes^{(p)}\Phi_{1}^{(p)} &=(\Phi_{1}^{(p)}+\Phi_{2}^{(p)})\frac{e^{\epsilon}}{z} \label{eq:FirstT},\\
\Tes^{(p)}\Phi_{k}^{(p)} &= \Phi_{k-1}^{(p)}\frac{e^{-\epsilon}}{z}+\Phi_{k+1}^{(p)}\frac{e^{\epsilon}}{z}\label{eq:Update_RuleT},\\
\Tes^{(p)}\Phi_{2^{n}}^{(p)} &=(\Phi_{2^{n}-1}^{(p)}+\Phi_{2^{n}}^{(p)})\frac{e^{-\epsilon}}{z}\label{eq:LastT},
\end{align}
for $1<k<2^n$.
We use the ansatz 
\begin{equation}
\Phi_k^{(p)} = e^{(ip-\epsilon)k} + \alpha e^{(-ip-\epsilon)k}
\end{equation}
with arbitrary complex parameters  $\alpha$ and $p$. 
This ansatz automatically satisfies \ref{eq:Update_RuleT} with eigenvalue
\begin{equation}
\Tes^{(p)}=\frac{2\cos{p}}{e^{\epsilon}+e^{-\epsilon}}.
\end{equation}

We set the value of $\alpha$ by solving \ref{eq:FirstT} and obtain $\alpha=\frac{e^{ip}-e^{-\Delta}}{e^{-\Delta}-e^{-ip}}$.
Note that this result forbids $p=0$ because it gives $\Phi_k^{(0)}=0$. 

At last, we satisfy \ref{eq:LastT} and obtain allowed values of $p$. 
The solution $ip=\pm\epsilon$ gives eigenvalue $1$ and corresponds to the eigenvector  $\Phi_{k}^{(p)} = e^{-2\Delta k}$. 
This is the asymptotic state of PPA as was proved in \cite{raeisi2015asymptotic}. 

The remaining eigenvalues are of form $\frac{2\cos{\frac{j\pi}{2^n}}}{e^{\epsilon}+e^{-\epsilon}}$ for $1\leq j < 2^n$. 
All these eigenvalues lie in the range $(-1,1)$. In other words, the Markov chain has a unique eigenvalue one and all the other eigenvalues are smaller than one. Therefore the Markov chain defined by the transition matrix $T$ converges to the +1 eigenvector, which is OAS.

The convergence rate is determined by the difference between $1$ and the second largest eigenvalue, 
  $\Tes^{(2)}=\frac{2\cos{\frac{\pi}{2^n}}}{e^{\epsilon}+e^{-\epsilon}}$. We can bound the gap as 
  $\Delta = 1-\left(2\cos{\frac{\pi}{2^n}}\right)/\left(e^{\epsilon}+e^{-\epsilon}\right)\geq \frac{z-2}{z}$.
The mixing time is then upper-bounded by 
\begin{equation}
\label{eq:SM_mixingtime_upperbound}
t_{\text{mix}}\left(\xi\right)\leq\log\left(\frac{1}{\xi\, l}\right)\frac{1}{\Delta} \leq \left( \log\left(\frac{1}{\xi\, l}\right) \left(\frac{z}{z-2}\right) \right) \leq c_1 \log\left(\frac{1}{ l}\right)+c2,
\end{equation}
where $c_1=\left(\frac{z}{z-2}\right)$ and $c_2= \left(\frac{z}{z-2}\right) \log\left(\frac{1}{\xi}\right)$ 
are both constant with respect to $n$. 
To find the scaling of the upper-bound, we need to calculate the 

\[\log\left(\frac{1}{ l}\right)=\log\left(\frac{1}{ p_0 e^{-(2^n-1)\epsilon}}\right)=\log\left(\frac{(1-(e^{-2\epsilon})^{2^n})}{ (1-e^{-2\epsilon}) e^{-(2^n-1)\epsilon}}\right). \]
For the values of $l$ and $p_0$ see \cite{raeisi2015asymptotic}. 

To understand the scaling, we take $n\gg1$ which simplifies the bound  to 
\[\log\left(\frac{1}{ e^{-(2^n-1)\epsilon}}\right)-c_3,\] 
where $c_3=\log\left(1-e^{-2\epsilon}\right)$. So the scaling of the upper-bound is $O(2^n)$.

\subsection*{The Circuit for $U_{\text{TS}}$}
The unitary that sorts the density matrix lexicographically is
\begin{align}
U _{\text{TS}}=
 \begin{pmatrix}
  1 & 0 & 0 & 0 & 0 &\cdots & 0 & 0 & 0 \\
  0 & 0 & 1 & 0 & 0 &\cdots & 0 & 0 & 0 \\
  0 & 1 & 0 & 0 & 0 &\cdots & 0 & 0 & 0 \\
  0 & 0 & 0 & 0 & 1 &\cdots & 0 & 0 & 0 \\
  0 & 0 & 0 & 1 & 0 &\cdots & 0 & 0 & 0 \\
  \vdots  & \vdots & \vdots   & \vdots & \vdots & \ddots &   \vdots  & \vdots & \vdots  \\
  0 & 0 & 0 & 0 & 0 &\cdots & 0 & 1 & 0 \\
  0 & 0 & 0 & 0 & 0 &\cdots & 1 & 0 & 0 \\
  0 & 0 & 0 & 0 & 0 &\cdots & 0 & 0 & 1
\end{pmatrix}. \label{SM:sorting_unitary}
\end{align}
$U$ is a block diagonal $2^n\times 2^n$ matrix with $1\times1$ blocks in the upper left and lower right corners and the other blocks are
$X=\begin{pmatrix} 0 & 1 \\ 1 & 0\end{pmatrix}$.

The first and the last operations are SHIFT$_m$ operators which are defined as
\begin{equation}
\text{SHIFT}_m\ket{x_1x_2\dots x_n}=\ket{\left(x_1x_2\dots x_n + m\right) mod 2^n}.
\label{SM:Shift}
\end{equation}
This notation is mostly symbolic but should be understood as binary addition between strings labeling states on $n$ qubits.
For example, SHIFT$_1\ket{101}=\ket{110}$.  The other operations are multiple-control-Toffoli, Toff$_n$, which is a controlled NOT with $n-1$
controls and NOT (Pauli X) on the last qubit. 

NOTs are easy to implement and for SHIFT$_1$ and Toff$_n$ 
we use the construction in \cite{saeedi2013linear}.

After Toff$_n$ we apply NOT on the last qubit. Multiple-control-Toffoli and NOT give together the unitary
\begin{align}
U'=
 \begin{pmatrix}
  0 & 1 & 0 & 0 &\cdots & 0 & 0 & 0 & 0\\
  1 & 0 & 0 & 0 &\cdots & 0 & 0 & 0  & 0\\
  0 & 0 & 0 & 1 &\cdots & 0 & 0 & 0  & 0\\
  0 & 0 & 1 & 0 &\cdots & 0 & 0 & 0  & 0\\
  \vdots  & \vdots   & \vdots & \vdots & \ddots &   \vdots & \vdots  & \vdots & \vdots  \\
  0 & 0 & 0 & 0 &\cdots & 0 & 1 & 0 & 0\\
  0 & 0 & 0 & 0 &\cdots & 1 & 0 & 0 & 0\\
  0 & 0 & 0 & 0 &\cdots & 0 & 0 & 1 & 0\\
  0 & 0 & 0 & 0 &\cdots & 0 & 0 & 0 & 1\\
\end{pmatrix}. \label{middle_unitary}
\end{align}
Therefore, to apply U, we must  shift all the rows and columns of $U'$  cyclically. This is what SHIFT$_{+1}$ and its conjugate
transpose SHIFT$_{-1}$ do. We can implement SHIFT$_{+1}$ with Quantum Fourier Transform and rotations \cite{beth2001quantum}. In Fig (\ref{shift}), we show how to implement a
more general operation, SHIFT$_{+m}$.

In this circuit, first, QFT$^{-1}$ transforms the bit-strings from registers to phases
\begin{equation}
\ket{x_{1}}\ket{x_{2}}\dots\ket{x_{n}}\rightarrow \frac{1}{\sqrt{2^{n}}}
\left(\ket{0}+e^{2\pi i\frac{x_{1}x_{2}\dots x_{n}}{2^{n}}}\ket{1}\right)
\left(\ket{0}+e^{2\pi i\frac{x_{2}\dots x_{n}}{2^{n-1}}}\ket{1}\right) \cdots
\left(\ket{0}+e^{2\pi i\frac{x_{n}}{2}}\ket{1}\right).
\end{equation}

In the next step, we apply rotations around $z$ on each qubit. This yields
\begin{equation}
\frac{1}{\sqrt{2^n}}\left(\ket{0} + e^{2\pi i\frac{ x_1x_2\dots x_n + m }{2^n}}\ket{1} \right)\left(\ket{0} + e^{2\pi i\frac{ x_2\dots x_n + m}{2^{n-1}}}\ket{1} \right)\cdots \left(\ket{0} + e^{2\pi i\frac{x_n + m}{2}}\ket{1} \right).
\end{equation}
After applying the Quantum Fourier Transform, we get the desired state $\ket{x_1x_2\dots x_n +m}$.

\begin{figure}[t]
\centering
\includegraphics[width=0.4\columnwidth]{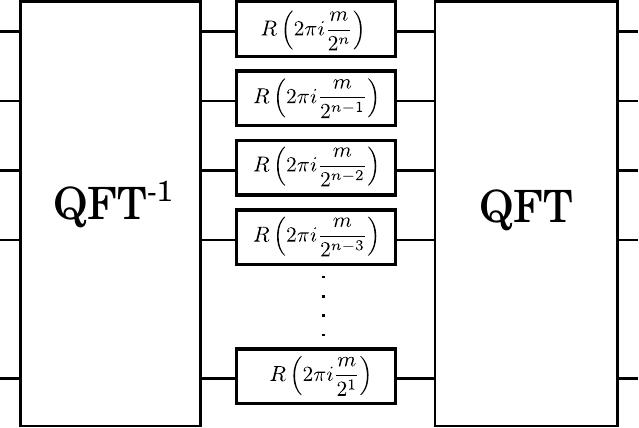}
\caption{The circuit for the SHIFT operator in Eq. (\ref{SM:Shift}).}
\label{shift}
\end{figure}

\subsection*{Complexity of the sort operations in PPA}

First, note that the implementation cost of a sort operation 
 could in general be exponential in $n$. 
Here we give a lower bound for the implementation cost 
that takes into account the cost of the sort operations. 
We use numerical evidence to show that the lower-bound scales 
exponentially with $n$. 

A permutation matrix basically permutes the basis and we need 
to find the circuit with the minimum number of gates that 
gives the same basis transformation. 
Clearly, mapping every element of the basis would be exponential 
in $n$ since there are $2^n$ basis to map. 
So we need to parallelize as many mappings as possible. 

As explained in the text, we assume that all the cycles of the
permutation have efficient implementation and for cycles of
the same size, it is easy to implement them in parallel. 
Without parallelizing  the implementation of blocks that have different sizes,
the complexity would be proportional to the number of blocks
with different size (NBDS). 

The numerical evidence presented in figure (5) in the text, 
shows that NBDS scales exponentially with $n$.

\subsection*{Non-diagonal density matrices}

It is possible that the density matrix becomes non-diagonal during the cooling process due to the noise in the system.
Here we show that our technique
can handle non-diagonal density matrices as well and would take them to the  
OAS.

We show that the presence of any non-diagonal elements will not affect the action of  $U_{\text{TS}}$ on the diagonal  of the density matrix. In other words, the same permutation is applied on the diagonal elements even if the density matrix is non-diagonal in computational basis.

We can write $U_{\text{TS}}$ as
\begin{equation}
U_{\text{TS}} = \ket{0}\bra{0} + \ket{2^{n}-1}\bra{2^{n}-1} + \sum_{i\in odd}^{2^{n}-2}{(\ket{i+1}\bra{i}+\ket{i}\bra{i+1})}.
\end{equation}
Now, if we apply the evolution to some arbitrary density matrix 
$\rho = \sum_{i,j}{\rho_{ij} \ket{i}\bra{j} }$, for the 
diagonal elements we get
\begin{equation}
\bra{i} U_{\text{TS}}\rho U^{\dagger}_{\text{TS}}\ket{i} = \rho_{0,0}\ket{0}\bra{0} + 
\rho_{2^{n}-1,2^{n}-1}\ket{2^{n}-1}\bra{2^{n}-1} + 
\sum_{i\in odd}{(\rho_{i,i}\ket{i+1}\bra{i+1}+\rho_{i+1,i+1}\ket{i}\bra{i})}. 
\end{equation}
The first two terms show that the first and last elements of 
the diagonal elements of the density matrix are preserved, while the 
last term indicates that for all odd values of $i$, we get 
$\rho_{i,i} \leftrightarrow \rho_{i+1,i+1}$. 
This gives the same permutation for the diagonal element as for a diagonal 
density matrix. This means that the diagonal element would again follow 
 a time-independent Markov chain given by the same transfer matrix, T 
(Eq. (5) of the main text). 

Note that polarization only depends on the 
diagonal elements of the density matrix, 
so it suffices to focus on the dynamics 
of these diagonal elements.
As was shown here, this dynamics would be the same regardless of
off-diagonal elements of the density matrix. 

\subsection*{Robustness}

Here we included some additional numerical simulation for the noise
model investigated in figure (3) of the main body, for larger $n$ and polarization, $\epsilon$. 
Specifically, we redid the simulation of PPA for $n=10$ computation and one
reset qubit, and for polarization of $0.1$ and with $\sigma = 0.1, 0.15, 0.2$.

\begin{figure}
\begin{centering}
\includegraphics[trim={0cm 0cm 0cm 0cm},width=.7\columnwidth]{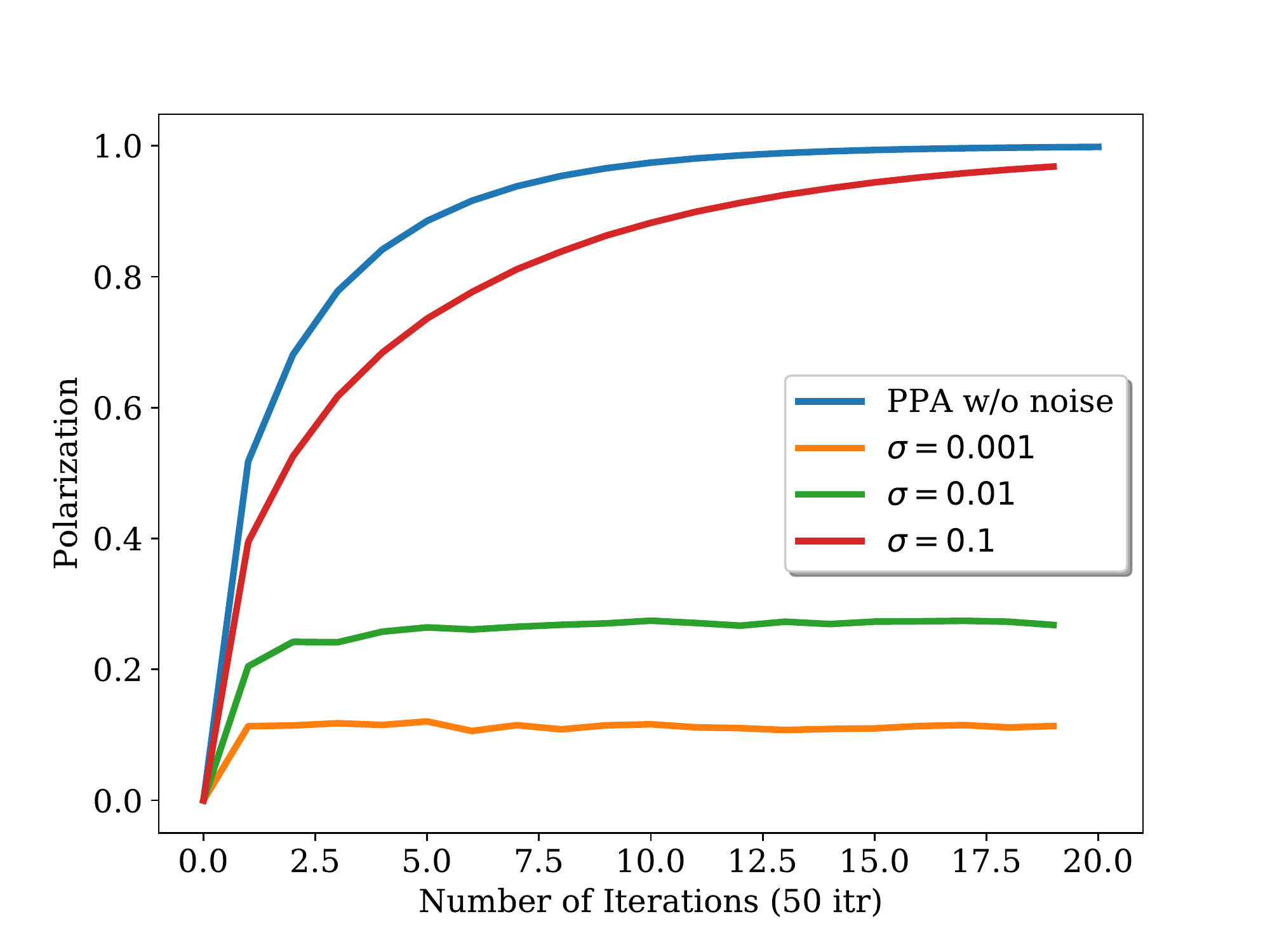}
\par\end{centering}

\caption{\label{fig:sm_Noisy_Tomography} 
This plot shows the polarization of the first qubit vs. iterations. 
This is for the simulation of PPA for $10$ computation and one
reset qubit, polarization of $0.1$ and with different amounts of the state estimation errors, $\sigma$ indicated in the legend of the plot.  }
\end{figure}

\subsection*{Practical Imperfections}
Here we discuss imperfections for practical applications.

There are two mechanisms for error that can be distinguished.

The first is due to the deviation between the state
we assume we have and the state we actually have.
PPA uses information about the state to find
the required sort operation for each iteration. This is in contrast
to TSAC which operates independently of the state.
This is the notion of robustness that was discussed
in the main body.

Second are errors directly introduced by errors in the implementation of the required
operations.
As we mentioned in the main body of this paper, it is not clear how either of these two techniques
would handle such imperfections.

Here we sketch an approach for attempting to analyze robustness in more detail, why we expect TSAC to be more resilient than PPA against certain types of noise, and some of the challenges with analyzing the various possible noise models.

With ideal PPA, the ideal input state $\rho = \rho_0$ undergoes a series of unitary transformations $U_1, U_2, \ldots, U_j, \ldots$, with a reset step in between, so that $\rho_j = U_j \mathcal{R}(\rho_{j-1}) U_j^{\dagger}$ where $\mathcal{R}$ denotes the reset step (where the reset qubits are given time to re-equilibrate and ideally nothing happens to the computation qubits) and the compression operator $U_j$ depends on the state $\rho_{j-1}$.

With ideal TSAC the ideal input state $\rho = \rho_0$ repeatedly undergoes the same unitary transformations $U$ (that does not depend on the input state), with a reset step in between, so $\rho_j = U \mathcal{R}(\rho_{j-1}) U^{\dagger}$.

A noisy PPA may be modelled as instead mapping  (where $\rho_0^{\prime}  =\rho_0$)
\[ \rho_{j-1}^{\prime} \mapsto \rho_j^{\prime} = U_j C_j (\mathcal{R}(\rho_{j-1}^{\prime})) U_j^{\dagger} \] for some completely positive map $C_j$.

Similarly a noisy TSAC may be modelled as instead mapping
\[ \rho_{j-1}^{\prime} \mapsto \rho_j^{\prime} = U C(\mathcal{R}(\rho_{j-1}^{\prime})) U^{\dagger} \] for some completely positive map $C$.

In these characterizations, the $C_j$ (or $C$) operators wlog capture the noise or imperfections while attempting to implement the ideal $U_j$ (or $U$) and we assume also capture the deviation between the ideal reset operation and the actual reset transformation.  
We note that technically not all imperfect implementations can be decomposed in this way. 
However, for the sake of just outlining our intuition here, we will model the noise this way (and we also note that we expect that an actual physical system designed to approximate the HBAC assumptions would have a reset operation that can be closely approximated by the ideal reset operation followed by additional noise on the reset and computation qubits).

The noise operators $C_j$ (or $C$) can affect the cooling process in two ways.
First, they can reduce the purity and directly heat up the state (e.g. a depolarizing channel would do this). Second, the change in the
state may make the subsequent cooling operation ineffective or even lead to cause heating.

More specifically, with PPA, the ideal $U_1$
is designed to sort the diagonal elements
of $\mathcal{R}(\rho_0)$ and thereby cool the state. However, if the diagonal elements of $C_1(\mathcal{R}(\rho_0) )$ have a different order, PPA could end up heating the state.
Similarly, subsequent $U_j$ are designed to cool the ideal input state $\mathcal{R}(\rho_{j-1})$, but may in fact end up heating up the actual state $C(\mathcal{R}(\rho_{j-1}^{\prime}))$ that has been impacted by imperfect operations.

For the first effect, i.e. the direct heating caused by the noise channel,
one could hope that if the heating rate of $C_j$ (or $C$) is less than the cooling
rate of $U_j$ (or $U$), we could show that $U_j C_j (\mathcal{R}( \rho_{j-1}^{\prime})) U_j^{\dagger}$  (or $U C (\mathcal{R}(\rho_{j-1}^{\prime})) U^{\dagger}$) would still improve the purity compared to $\mathcal{R}(\rho_{j-1}^{\prime})$. However, both the heating rate of $C_j$ (or $C$) and cooling rate of $U_j$ (or $U$) depend on their input states, making it challenging to determine if there is a net cooling or heating after both operations are applied.

For the second effect, TSAC seems to have an advantage compared to PPA.  Theorem (1) in the main text implies that, assuming  $C_j (\mathcal{R}(\rho_{j-1}^{\prime}))$ is hotter than
the optimal asymptotic state (OAS), then $U C_j (\mathcal{R}(\rho_{j-1}^{\prime})) U^{\dagger}$ would have a lower temperature than $C_j (\mathcal{R}(\rho_{j-1}^{\prime}))$. But it may not necessarily have a lower temperature than $\mathcal{R}(\rho_{j-1}^{\prime})$.
In contrast, for techniques like PPA,
since the compression is designed for the idea $\rho_{j-1}$, it may not only be
ineffective for $C_j(\mathcal{R}(\rho_{j-1}^{\prime}))$ but it may even heat it up.

In summary, we have outlined some intuition on
the complexity of analyzing imperfections in the implementation of compression operators, and why there may exist noise models under which the
noisy implementation would heat the state,
and other noise models where the state would get cooled.
This topic needs further investigation.

\end{document}